\newcommand{\database}{{\mathcal{D}}}    
\newcommand{\noise}{{X}}    
\newcommand{\KM}{{\mathcal{K}}}    
\newcommand{\e}{{\epsilon}}    
\newcommand{\loss}{{\mathcal{L}}}    
\newcommand{\D}{{\Delta}}    
\newcommand{\p}{{\mathcal{P}}}  
\newcommand{\sP}{{\mathcal{SP}}}  
\newcommand{\psym}{{\p_{\text{sym} }}}
\newcommand{\pSymMon}{{\p_{\text{sym, mon} }}}
\newcommand{\R}{{\mathbb{R}}}  
\newtheorem{theorem}{Theorem}
\newtheorem{lemma}{Lemma}
\newtheorem{definition}{Definition}
\newtheorem{corollary}[theorem]{Corollary}
\newtheorem{property}{Property}
\begin{document}

\twocolumn[

\aistatstitle{Optimal Noise-Adding Mechanism in Additive Differential Privacy}

\aistatsauthor{Quan Geng \And Wei Ding \And  Ruiqi Guo  \And  Sanjiv Kumar}
\aistatsaddress{Google AI \And  Google AI \And Google AI \And Google AI} ]

\begin{abstract}
We derive the optimal $(0, \delta)$-differentially private query-output independent noise-adding mechanism for single real-valued query function under a general cost-minimization framework. Under a mild technical condition, we show that the optimal noise probability distribution is a uniform distribution with a probability mass at the origin. We explicitly derive the optimal noise distribution for general $\ell^p$ cost functions, including $\ell^1$ (for noise magnitude) and $\ell^2$ (for noise power) cost functions, and show that the probability concentration on the origin occurs when $\delta > \frac{p}{p+1}$. Our result demonstrates an improvement over the existing Gaussian mechanisms  by a factor of two and three for $(0,\delta)$-differential privacy in the high privacy regime in the context of minimizing the noise magnitude and noise power, and the gain is more pronounced in the low privacy regime. Our result is consistent with the existing result for $(0,\delta)$-differential privacy in the discrete setting, and identifies a probability concentration phenomenon in the continuous setting.
\end{abstract}

\section{Introduction} \label{sec:intro}
Differential privacy, introduced by \citet{DMNS06}, is a framework to quantify to what extent individual privacy in a statistical dataset is preserved while releasing useful aggregate information about the dataset.
Differential privacy provides strong privacy guarantees by requiring the near-indistinguishability of whether an individual is in the dataset or not based on the released information. 
For more motivation and background of differential privacy, we refer the readers to the survey by \citet{DPsurvey} and the book by \citet{DPbook}.

The classic differential privacy is called $\e$-differential privacy, which imposes an upper bound $e^\e$ on the multiplicative distance of the probability distributions of the randomized query outputs for any two neighboring datasets, and the standard approach for preserving $\e$-differential privacy is to add a Laplacian noise to the query output.
Since its introduction, differential privacy has spawned a large body of research in differentially private data-releasing mechanism design, and the noise-adding mechanism has been applied in many machine learning algorithms to preserve differential privacy, e.g., 
logistic regression \citep{CM08}, 
empirical risk minimization \citep{ERM}, 
online learning \citep{Jain12}, 
statistical risk minimization \citep{Duchi12}, 
deep learning \citep{Shokri15, Abadi2016, Phan2016, Agarwal18}, 
hypothesis testing \citep{HT18},
matrix completion \citep{JainMC},
expectation maximization \citep{EM},
and principal component analysis \citep{PCA, PCA2}.

To fully make use of the randomized query outputs, it is important to understand the fundamental trade-off between privacy and utility (accuracy).  \citet{Ghosh09} studied a very general utility-maximization framework for a single count query with sensitivity one  under  $\epsilon$-differential privacy. 
\cite{minimax10} derived the optimal noise probability distributions for a single count query with sensitivity one for minimax (risk-averse) users.
\citet{GV_IT_Epsilon} derived the optimal $\e$-differentially private noise adding mechanism for single real-valued query function with arbitrary query sensitivity, and show that the optimal noise distribution has a staircase-shaped probability density function. \cite{GV_2_Dimension} generalized the result in \cite{GV_IT_Epsilon} to two-dimensional query output space for the $\ell^1$ cost function, and show the optimality of a two-dimensional staircase-shaped probability density function. \cite{DomingoFerrer2013} also independently derived the staircase-shaped noise probability distribution under a different optimization framework.

A relaxed notion of $\e$-differential privacy is $(\e,\delta)$-differential privacy, introduced by \cite{DKMMN06}. The common interpretation of $(\e,\delta)$-differential privacy is that it is $\e$-differential privacy ``except with probability $\delta$'' \citep{RenyiDP}.
The standard approach for preserving $(\e,\delta)$-differential privacy is the Gaussian mechanism, which adds a Gaussian noise to the query output.
\cite{GV_IT_Approximate} studied the trade-off between utility and privacy for a single \emph{integer-valued} query function in $(\epsilon, \delta)$-differential privacy. \cite{GV_IT_Approximate} show that for $\ell^1$ and $\ell^2$ cost functions, the discrete uniform noise distribution is optimal for $(0,\delta)$-differential privacy when the query sensitivity is one, and is \emph{asymptotically} optimal as $\delta \to 0$ for arbitrary query sensitivity. \cite{GW2_18} extend the result for single real-valued query functions under $(\e, \delta)$-differential privacy and show that the truncated Laplacian mechanism is asymptotically optimal in various high privacy regimes.
\cite{icmlGaussian} improved the classic analysis of the Gaussian mechanism for $(\e,\delta)$-differential in the high privacy regime ($\e \to 0$), and develops an optimal Gaussian mechanism whose variance is calibrated directly using the Gaussian cumulative density function instead of a tail bound approximation.

$(\e, 0)$-differential privacy and $(0,\delta)$-differential privacy can be viewed as two special cases of the commonly used $(\epsilon, \delta)$-differential privacy paradigm. While $(\epsilon, 0)$-differential privacy is well studied and exact optimality result has been obtained, little is known about $(0, \delta)$-differential privacy. Characterizing the privacy-utility tradeoff in $(0, \delta$)-differential privacy is important towards understanding the fundamental privacy and utility tradeoff in $(\epsilon, \delta)$-differential privacy.

\subsection{Our Contributions}

In this work,  we characterize the fundamental trade-off between privacy and utility in $(0, \delta)$-differential privacy for a single read-valued query function. Within the class of query-output independently noise-adding mechanisms, we derive the optimal noise distribution for $(0, \delta)$-differential privacy under a general cost-minimization framework similar to \cite{Ghosh09,minimax10,GV_ISIT14, GV_2_Dimension, GV_IT_Approximate}.
Under a mild technical condition on the noise probability distribution\footnote{In this work, we assume that the noise probability distribution has higher probability over the small noise than the big noise. This condition is satisfied by virtually all probability distributions used in differential privacy, including the uniform distribution, the Laplacian distribution, the truncated Laplacian distribution, the Gaussian distribution, and the staircase distribution. While the optimality result in this paper depends on this assumption, we believe this assumption can be done away.}, we show that the optimal noise probability distribution is a uniform  distribution with a probability mass at the origin, which can be viewed as the distribution of the product of a uniform random variable and a Bernoulli random variable. The probability mass on the origin can be zero or non-zero, depending on the value of $\delta$.
We explicitly derive the optimal noise distribution for general $\ell^p$ cost functions, including $\ell^1$ (for noise magnitude) and $\ell^2$ (for noise power) cost functions, and show that the probability concentration on the origin occurs when $\delta > \frac{p}{p+1}$. 

Compared with the improved Gaussian mechanisms for $(0,\delta)$-differential privacy \citep{icmlGaussian}, our result demonstrates a two-fold and three-fold improvement in the high privacy regime in the context of minimizing the noise magnitude and noise power, respectively. The improvement is more pronounced in the low privacy regime.

Comparing the exact optimality results of $\e$-differential privacy and $(0,\delta)$-differential privacy, we show that given the same amount of privacy constraint, $(0,\delta)$-differential privacy yields a higher utility than $\e$-differential privacy in the high privacy regime.

Our result is consistent with the existing result for $(0,\delta)$-differential privacy in the discrete setting \citep{GV_IT_Approximate} which shows that the discrete uniform distribution is optimal for an \emph{integer-valued} query function when the query sensitivity is one, and asymptotically optimal as $\delta \to 0$ for general query sensitivity. Interestingly, our result identifies a probability concentration phenomenon in the continuous setting for single \emph{real-valued} query function.

\subsection{Organization}

The paper is organized as follows. In Section \ref{sec:formulation}, we give some preliminaries on differential privacy, and formulate the trade-off between privacy and utility under $(0, \delta)$-differential privacy for a single real-valued query function as a functional optimization problem. 
Section \ref{sec:result} presents the optimal noise probability distribution preserving $(0, \delta)$-differential privacy, subject to a mild technical condition. Section \ref{sec:application} applies our main result to a class of momentum cost functions, and derives the explicit forms of the optimal noise probability distributions with minimum noise magnitude and noise power, respectively. Section \ref{sec:comparison} compares our result with the improved Gaussian mechanism in the context of minimizing noise magnitude and noise power. 

\section{Problem Formulation} \label{sec:formulation}
 In this section, we first give some preliminaries on differential privacy, and then formulate the trade-off between privacy and utility under $(0, \delta)$-differential privacy for a single real-valued query function as a functional optimization problem. 

\subsection{Background on Differential Privacy}

Consider a real-valued query function
\begin{align*}
    q: \database \rightarrow \R,
\end{align*}
where $\database$ is the set of all possible datasets. The real-valued query function $q$ will be applied to a dataset, and the query output is a real number. Two datasets $D_1, D_2 \in \database$ are called neighboring datasets if they differ in at most one element, i.e.,  one is a proper subset of the other and the larger dataset contains just one additional element \citep{DPsurvey}. A randomized query-answering mechanism $\KM$ for the query function $q$ will randomly output a number with probability distribution depending on query output $q(D)$, where $D$ is the dataset.

\begin{definition}[$\e$-differential privacy \citep{DPsurvey}]
    A randomized mechanism $\KM$ gives $\e$-differential privacy if for all data sets $D_1$ and $D_2$ differing on at most one element, and any measurable set $S \subset \text{Range}(\KM)$,
    \begin{align}
        \text{Pr}[\KM(D_1) \in S] \le e^\e \;  \text{Pr}[\KM(D_2) \in S], \label{eqn:dpgeneral}
     \end{align}
     where $\KM(D)$ is the random output of the  mechanism $\KM$ when the query function $q$ is applied to the dataset $D$.
\end{definition}

The differential privacy constraint \eqref{eqn:dpgeneral} imposes an upper bound $e^\e$ on the multiplicative distance of the two probability distributions. It
essentially requires that for all neighboring datasets, the probability distributions of the output of the randomized mechanism should be approximately the same. Therefore, for any individual record, its presence or absence in the dataset will not significantly affect the output of the mechanism, which makes it hard for adversaries with arbitrary background knowledge to make inference on any individual from the released query output information. The parameter $\e \in (0, +\infty)$ quantifies how private the mechanism is: the smaller $\e$ is, the more private the randomized mechanism is.

The standard approach to preserving $\e$-differential privacy is to perturb the query output by adding a random noise with Laplacian distribution proportional to the sensitivity $\Delta$ of the query function $q$, where the sensitivity of a real-valued query function is defined as:

\begin{definition}[Query Sensitivity \citep{DPsurvey}]
For a real-valued query function $q: \database \rightarrow \R$, the sensitivity of $q$ is defined as
\begin{align*}
    \D := \max_{D_1,D_2 \in \database} | q(D_1) - q(D_2)|,
\end{align*}
for all $D_1,D_2$ differing in at most one element.
\end{definition}

Introduced by \cite{DKMMN06}, a relaxed version of $\epsilon$-differential privacy is $(\epsilon, \delta)$-differential privacy, which relaxes the constraint \eqref{eqn:dpgeneral} with an additive term $\delta \in [0, 1]$.

\begin{definition}[$(\e,\delta)$-differential privacy \citep{DKMMN06}]
A randomized mechanism $\KM$ gives $(\e, \delta)$-differential privacy if for all data sets $D_1$ and $D_2$ differing on at most one element, and all $S \subset \text{Range}(\KM)$,
\begin{align*}
    \text{Pr}[\KM(D_1) \in S] \le e^\e \;  \text{Pr}[\KM(D_2) \in S] + \delta.
 \end{align*}
\end{definition}

In the special case where $\e = 0$, the constraint for $(0, \delta)$-differential privacy is
\begin{align}
    \text{Pr}[\KM(D_1) \in S] \le \text{Pr}[\KM(D_2) \in S] + \delta. \label{eq:dp_zero_delta_constraint}
\end{align}

It is ready to see that $(0, \delta)$-differential privacy puts an upper bound $\delta$ on the \emph{additive} distance of the two probability distributions.

\subsection{$(0, \delta)$-Differential Privacy Constraint on the Noise Probability Distribution}

A standard approach for preserving differential privacy is query-output independent noise-adding mechanisms, where a random noise is added to the query output.  Given a dataset $D$, a query-output independent noise-adding mechanism $\KM$ will release the query output $t = q(D)$ corrupted by an additive random noise $X$ with probability distribution $\p$:
\begin{align}
    \KM(D) = t + X.
\end{align}

The $(0, \delta)$-differential privacy constraint \eqref{eq:dp_zero_delta_constraint}  on $\KM$ is that for any $t_1,t_2 \in \R$ such that $|t_1 - t_2| \le \D $ (corresponding to the query outputs for two neighboring datasets),
\begin{align*}
    \p (S) \le \p(S + t_1 - t_2) + \delta, \forall \; \text{measurable set} \; S \subset \R, 
\end{align*}
where $ \forall t \in \R$,  $S+t$ is defined as the set  $\{s+t \, | \, s \in S\}$.

Equivalently, the $(0,\delta)$-differential privacy constraint on the noise probability distribution $\p$ is
\begin{align}
    \p (S) \le \p(S + d) + \delta, \forall \; |d| \le \D, \text{measurable set} \; S \subset \R.  \label{eqn:diffgeneralnoise1}
\end{align}

\subsection{Utility Model}

Consider a cost function $\loss(\cdot): \R \to \R$, which is a function of the additive noise in the query-output noise-adding mechanism. Given an additive noise $x$, the cost is $\loss(x)$, and thus the expectation of the cost over $\p$ is
\begin{align}
    \int_{x\in \R} \loss(x) \p(dx).\label{eqn:obj1}
\end{align}

Our objective is to minimize the expectation of the cost over the noise probability distribution for preserving $(0,\delta)$-differential privacy.

\subsection{Optimization Problem}

Combining the differential privacy constraint \eqref{eqn:diffgeneralnoise1} and the objective function \eqref{eqn:obj1}, we formulate a functional optimization problem:
\begin{align}
    \mathop{\text{minimize}}\limits_{\p} & \ \int_{x \in \R} \loss(x) \p(dx) \label{eqn:objecinopt}\\
    \text{subject to} \; & \forall \ \text{measurable set} \ S\subseteq \R, \ \forall |d| \le \D.  \nonumber \\
     & \ |\p (S) - \p(S + d)| \le  \delta \label{eq:mainconstraint}
\end{align}

\section{Main Result} \label{sec:result}
In this section, we solve the functional optimization problem \eqref{eqn:objecinopt} for $(0, \delta)$-differential privacy, and present our main result in Theorem \ref{thm:main}. Under a mild technical condition on the probability distribution (see Property \ref{property:noise-decreasing}), we show that the optimal noise probability distribution is a uniform  distribution with a probability mass at the origin, which can be viewed as the distribution of the product of a uniform random variable and a Bernoulli random variable.

We assume that the cost function $\loss(\cdot)$ satisfies a natural property.

\begin{property} \label{property1}
    $\loss(x)$ is a symmetric function, and monotonically increasing for $x \ge 0$, i.e, $\loss(x)$ satisfies
    \begin{align*}
        \loss(x) &= \loss(-x), \forall \; x \in \R,
    \end{align*}
    and
    \begin{align*}
        \loss(x) &\le \loss(y), \forall \; 0 \le x \le y.
    \end{align*}
\end{property}

First, we show that without loss of generality, we only need to consider symmetric noise probability distributions.

\begin{lemma}\label{lem:symmetry}
Given a noise probability distribution $\p$ satisfying \eqref{eq:mainconstraint}, there exists a probability distribution $\hat{\p}$ such that $\hat{\p}$ satisfies \eqref{eq:mainconstraint}, and 
\begin{align*}
    \hat{\p}(S) = \hat{\p}(-S), \forall \; \text{measurable set}\ S \subseteq \R,  
\end{align*}
and     
\begin{align*}
    \int_{x\in \R} \loss(x) \hat{\p}(dx) = \int_{x\in \R} \loss(x) \p(dx). 
\end{align*}
\end{lemma}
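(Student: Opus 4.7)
The plan is to construct $\hat{\p}$ as the symmetrization of $\p$ with respect to its reflection about the origin, and then verify the three required properties. Specifically, I would define the reflected distribution $\p'$ by $\p'(S) := \p(-S)$ for every measurable $S \subseteq \R$ (where $-S := \{-s : s \in S\}$), and then set
\begin{align*}
    \hat{\p} := \tfrac{1}{2}\p + \tfrac{1}{2}\p'.
\end{align*}
This is clearly a probability distribution on $\R$.

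For the symmetry property, I would just unwrap the definition: $\hat{\p}(S) = \tfrac{1}{2}\p(S) + \tfrac{1}{2}\p(-S)$, and $\hat{\p}(-S) = \tfrac{1}{2}\p(-S) + \tfrac{1}{2}\p(-(-S)) = \tfrac{1}{2}\p(-S) + \tfrac{1}{2}\p(S)$, which gives $\hat{\p}(S) = \hat{\p}(-S)$. For cost preservation, since $\loss$ is symmetric by Property \ref{property1}, the change of variables $x \mapsto -x$ gives $\int \loss(x)\,\p'(dx) = \int \loss(-x)\,\p(dx) = \int \loss(x)\,\p(dx)$, so averaging yields $\int \loss(x)\,\hat{\p}(dx) = \int \loss(x)\,\p(dx)$.

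The only step that requires a little care is verifying that $\hat{\p}$ still satisfies the $(0,\delta)$-differential privacy constraint \eqref{eq:mainconstraint}. For this, I would write
\begin{align*}
    \hat{\p}(S) - \hat{\p}(S+d) = \tfrac{1}{2}\bigl[\p(S) - \p(S+d)\bigr] + \tfrac{1}{2}\bigl[\p'(S) - \p'(S+d)\bigr],
\end{align*}
and observe that $\p'(S) - \p'(S+d) = \p(-S) - \p(-S - d) = \p(S'') - \p(S'' + d'')$ with $S'' := -S - d$ and $d'' := -d$, so $|d''| = |d| \le \D$. Both bracketed terms are therefore bounded in absolute value by $\delta$ since $\p$ satisfies \eqref{eq:mainconstraint}, and the triangle inequality gives $|\hat{\p}(S) - \hat{\p}(S+d)| \le \delta$.

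No step here looks like a real obstacle; this is a textbook symmetrization argument, and the main thing to get right is the bookkeeping in the reflection step so that the index-shift $S \mapsto -S - d$ is correctly identified as an allowed shift of magnitude at most $\D$. This lemma is what lets the subsequent analysis restrict attention to symmetric noise distributions without loss of generality.
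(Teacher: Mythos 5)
Your proposal is correct and follows essentially the same route as the paper: define $\hat{\p}$ as the average of $\p$ and its reflection, use the symmetry of $\loss$ for cost preservation, and bound $|\hat{\p}(S)-\hat{\p}(S+d)|$ by the triangle inequality, noting that the reflected term is again an instance of the constraint with a shift of the same magnitude. (Minor bookkeeping slip: to write $\p(-S)-\p(-S-d)$ as $\p(S'')-\p(S''+d'')$ you want $S'':=-S$ and $d'':=-d$, not $S'':=-S-d$; the bound is unaffected.)
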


\begin{proof}
Define $\hat{\p}$ as follows: $ \forall \; \text{measurable set} S \subseteq \R$,
\begin{align*}
    \hat{\p}(S) := \frac{\p(S) + \p(-S)}{2}.
\end{align*}

It is ready to see $\hat{\p}$ is a symmetric probability distribution. As the loss function $\loss(\cdot)$ is symmetric, we have $\int_{x\in \R} \loss(x) \hat{\p}(dx) = \int_{x\in \R} \loss(x) \p(dx).$

Next we show that $\hat{\p}$ also satisfies the differential privacy constraint. Indeed, $\forall \ measurable\ set\ S \subseteq \R$ and $d$ such that $|d| \le \D$, we have
\begin{align*}
 &\;\;\; \;\;|\hat{\p} (S) - \hat{\p}(S + d)| \\
 & = | \frac{\p(S) + \p(-S)}{2} -  \frac{\p(S+d) + \p(-S-d)}{2} |  \\
 & \le \frac{|\p(S) - \p(S+d)|}{2} +  \frac{|\p(-S)+ \p(-S-d)|}{2} \\
 & \le \frac{\delta}{2} + \frac{\delta}{2} \\
 & = \delta.
\end{align*}
\end{proof}

Due to Lemma \ref{lem:symmetry}, we can restrict ourselves to symmetric noise probability distributions. 

As the loss function $\loss(\cdot)$ is monotonically increasing as the noise becomes bigger, we impose a mild and natural condition on the symmetric noise probability contribution, which requires the noise probability distribution to have bigger probability measure on the small noise than the large noise. More precisely,
\begin{property} \label{property:noise-decreasing}
    Given a symmetric probability measure $\p \in \psym$, $\p$ is monotonically decreasing if
    \begin{align*}
        \p(S) \ge \p(S + a), \forall a \ge 0, \ \text{measurable set} \ S \subseteq [0, +\infty)
    \end{align*}
\end{property}

Property \ref{property:noise-decreasing} is satisfied by a large class of probability distributions, including the uniform distribution, the Laplacian distribution and the Gaussian distribution.

Let $\pSymMon$ denote the set of symmetric probability measures which are monotonically decreasing. 

\begin{lemma}\label{lem:no-singular-mass}
Given $\p \in \pSymMon$, then for any $t \neq 0$, $\p(\{t\}) = 0$, i.e., $\p$ cannot have a non-zero probability mass on any singular point except the origin $t=0$. 
\end{lemma}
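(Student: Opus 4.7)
The plan is to argue by contradiction, exploiting the fact that the monotone decreasing property forces any atom away from the origin to propagate mass to a continuum of points.

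Suppose for contradiction that $\p(\{t_0\}) = c > 0$ for some $t_0 \neq 0$. By the symmetry of $\p$, we have $\p(\{-t_0\}) = \p(\{t_0\}) = c$, so without loss of generality I take $t_0 > 0$. For each $s$ with $0 < s \le t_0$, I apply Property \ref{property:noise-decreasing} to the singleton $S = \{s\} \subseteq [0,+\infty)$ with shift $a = t_0 - s \ge 0$, which gives
\begin{align*}
    \p(\{s\}) \;\ge\; \p(\{s\} + (t_0 - s)) \;=\; \p(\{t_0\}) \;=\; c.
\end{align*}

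Thus every point in the interval $(0, t_0]$ carries $\p$-mass at least $c > 0$. Choosing any finite subset $\{s_1, \dots, s_N\} \subseteq (0, t_0]$ and using finite additivity yields $\p(\{s_1, \dots, s_N\}) \ge Nc$, so for $N > 1/c$ we get $\p(\mathbb{R}) > 1$, contradicting that $\p$ is a probability measure. Hence no such $t_0 \neq 0$ can exist, proving the lemma.

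I do not expect any real obstacle here; the only subtlety is to make sure Property \ref{property:noise-decreasing} is applied correctly (the set $S$ must lie in $[0,+\infty)$ and the shift $a$ must be nonnegative), both of which are satisfied by taking $s \in (0,t_0]$ and $a = t_0-s$. The symmetry of $\p$ is used only to reduce to the case $t_0 > 0$; the origin itself is excluded because Property \ref{property:noise-decreasing} does not bound $\p(\{0\})$ from above by any $\p(\{t\})$ with $t > 0$.
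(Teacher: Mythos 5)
Your proof is correct and follows essentially the same route as the paper: monotonicity propagates the atom's mass to every point of $(0,t_0]$, and countably many disjoint atoms of mass at least $c$ contradict $\p(\R)=1$. If anything, your version is slightly more careful than the paper's, which asserts a strict inequality $\p(\{t'\}) > \p(\{t\})$ that Property \ref{property:noise-decreasing} does not actually provide; your non-strict bound $\p(\{s\}) \ge c$ is all that is needed.
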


\begin{proof}
Suppose there exists $t \neq 0$ such that $\p(\{t\}) \neq 0 $. Since $\p$ is symmetric, we can assume $t >0$. Since $\p$ is monotonically decreasing in $[0, +\infty)$, for any $t' \in (0, t), \p(\{t'\}) \geq \p(\{t\})$, which implies $\p((0,t)) = +\infty$. This contradicts with the fact that $\p$ is a probability measure.  
\end{proof}

Within the classes of monotonically decreasing probability distributions, we identify a sufficient and necessary condition for preserving $(0,\delta)$-differential privacy.

\begin{theorem}\label{thm:sufficient-necessary}
Given $\p \in \pSymMon$, $\p$ satisfies the $(0, \delta)$-differential privacy constraint \eqref{eq:mainconstraint}, if and only if 
\begin{align*}
\p([-\frac{\D}{2}, \frac{\D}{2}]) \le \delta.
\end{align*}
\end{theorem}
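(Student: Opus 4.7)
The plan is to prove the biconditional by addressing the two directions separately; necessity is elementary, while sufficiency reduces to identifying the set $S$ that maximizes $\p(S) - \p(S + d)$.

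For necessity, I will test the $(0, \delta)$-DP constraint on the family of sets $S_a := [-\D/2, a]$ with shift $d = \D$ for $a > \D/2$. The constraint yields $\p([-\D/2, a]) - \p([\D/2, a + \D]) \le \delta$, and letting $a \to +\infty$, continuity of the measure together with countable additivity gives
\begin{align*}
\p([-\D/2, +\infty)) - \p([\D/2, +\infty)) = \p([-\D/2, \D/2)) \le \delta.
\end{align*}
By Lemma~\ref{lem:no-singular-mass} the endpoint $\D/2 \neq 0$ carries no mass, so $\p([-\D/2, \D/2)) = \p([-\D/2, \D/2])$, which is the desired inequality.

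For sufficiency, the plan is to establish the sharper identity
\begin{align*}
    \sup_{S \text{ measurable}} \bigl|\p(S) - \p(S + d)\bigr| = \p\bigl([-d/2, d/2]\bigr), \quad \forall\, d \ge 0,
\end{align*}
from which the DP constraint for all $|d| \le \D$ reduces to the hypothesis $\p([-\D/2, \D/2]) \le \delta$, using that $d \mapsto \p([-d/2, d/2])$ is non-decreasing (a direct consequence of Property~\ref{property:noise-decreasing}). To prove the identity, I will combine Lemma~\ref{lem:no-singular-mass} with the monotonicity property to decompose $\p = m\,\delta_0 + f(x)\,dx$, where $m \ge 0$ is the possible atom at the origin and $f$ is a symmetric density non-increasing on $[0, +\infty)$. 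Absolute continuity off the origin follows because the cumulative function $F(x) := \p((0, x])$ satisfies $F(b) - F(a) \le F(b - a)$ for all $0 < a < b$ (by the monotonicity property), so $F$ is concave on $(0, +\infty)$ and hence absolutely continuous with a non-increasing derivative. With this representation in hand,
\begin{align*}
    \p(S) - \p(S + d) = m\bigl[\mathbf{1}_S(0) - \mathbf{1}_S(-d)\bigr] + \int_S \bigl[f(x) - f(x + d)\bigr]\, dx,
\end{align*}
and the symmetry-plus-monotonicity of $f$ yields $f(x) \ge f(x + d) \Longleftrightarrow x \ge -d/2$ for $d > 0$. Hence the supremum is attained at $S^{\star} = [-d/2, +\infty)$, producing the value $\p([-d/2, d/2])$; the absolute value is then handled via the complementation identity $\p(S^{c}) - \p(S^{c} + d) = -\bigl(\p(S) - \p(S + d)\bigr)$.

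The main technical obstacle is the absolute-continuity step: converting the qualitative monotonicity hypothesis on $\p$ into the existence of a non-increasing density, through concavity of the one-sided CDF. Once this regularity is secured, the remainder is a transparent pointwise optimization of the integrand, in which the atom at the origin contributes cleanly through the optimal choices $0 \in S^{\star}$ and $-d \notin S^{\star}$.
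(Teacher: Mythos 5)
Your proof is correct and follows the same skeleton as the paper's: necessity via the test set $[-\D/2,+\infty)$ with shift $d=\D$ (your limiting family $S_a$ is an unnecessary detour, since the half-line is itself a measurable set, but harmless), and sufficiency by showing that $\p(S)-\p(S+d)$ is maximized at $S=[-d/2,+\infty)$. The substantive difference is that the paper merely asserts this extremal-set claim as ``easy to see,'' whereas you actually prove it by decomposing $\p$ into an atom at the origin plus an absolutely continuous part with a symmetric non-increasing density and optimizing the integrand pointwise; this fills a genuine gap in the paper's argument. One imprecision in your regularity step: the inequality you cite, $F(b)-F(a)\le F(b-a)$, is mere subadditivity of $F$ and does not by itself imply concavity (a one-sided CDF that is flat on an interval and then rises again with smaller slope can be subadditive without being concave). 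What you need, and what Property~\ref{property:noise-decreasing} actually supplies by translating $(a,a+h]$ onto $(b,b+h]$, is the stronger increment comparison $F(a+h)-F(a)\ge F(b+h)-F(b)$ for all $0\le a\le b$ and $h>0$; this gives midpoint concavity, hence (with monotonicity) concavity and the absolute continuity you invoke. With that substitution the argument is complete; the only remaining nit is that your supremum identity fails at $d=0$ when $\p$ has an atom at the origin, but that case is vacuous for the theorem.
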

\begin{proof}
First we show that it is a necessary condition. Assume $\p$ satisfies the $(0, \delta)$-differential privacy constraint \eqref{eq:mainconstraint}. Consider $S = [-\frac{\D}{2}, +\infty)$ and $d= \D$ in \eqref{eq:mainconstraint}, and we have
\begin{align*}
|\p([-\frac{\D}{2}, +\infty)) - \p([\frac{\D}{2}, +\infty)) | \le \delta,
\end{align*}
and thus $\p([-\frac{\D}{2}, \frac{\D}{2})) \le \delta$. Due to Lemma \ref{lem:no-singular-mass}, $\p([-\frac{\D}{2}, \frac{\D}{2}]) = \p([-\frac{\D}{2}, \frac{\D}{2})) \le \delta$.

Next we show that it is a sufficient condition. Assume $\p([-\frac{\D}{2}, \frac{\D}{2}]) \le \delta$. As $\p$ is symmetric and the differential privacy constraint \eqref{eq:mainconstraint} applies to all measurable subset $S \subseteq \R$ and all $d$ such that $|d| \le \D$, it is equivalent to show that $\p(S) - \p(S+d) \le \delta, \forall d \in (0, \D]$.

Since $\p$ is symmetric and monotonically decreasing in $[0, +\infty)$, $\p(S) - \p(S+d)$ is maximized when $S = [-\frac{d}{2}, +\infty)$. Therefore, $\forall d \in (0, \D]$,
\begin{align*}\
&\;\;\;\; \p(S) - \p(S+d) \\
&\le \p([-\frac{d}{2}, +\infty)) - \p([-\frac{d}{2}, +\infty) + d) \\
&= \p([-\frac{d}{2}, \frac{d}{2})) \\
&\le \p([-\frac{\D}{2}, \frac{\D}{2})) \\
&\le \delta.
\end{align*}

This concludes the proof of Theorem \ref{thm:sufficient-necessary}.
\end{proof}

Consider a class of probability distributions $\{P_\alpha\}$ parameterized by $\alpha \in [0, \delta)$, where $\p_\alpha$ is defined as
\begin{align*}
\p_\alpha(\{0\}) = \alpha
\end{align*}
and except the point $t=0$, $\p_\alpha$ has a uniform probability distribution over the set $[-\frac{1-\alpha}{\delta-\alpha}\frac{\D}{2}, \frac{1-\alpha}{\delta-\alpha}\frac{\D}{2}] \setminus \{0\}$
with probability density $\frac{\delta - \alpha}{\D}$ (see Figure \ref{fig:p_alpha}).

\begin{figure}[H]
\centering
\includegraphics[width=0.40\textwidth]{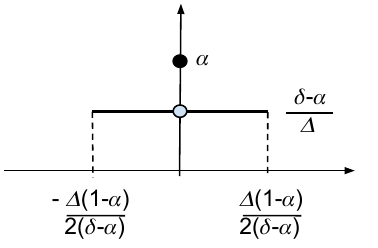}
\caption{Probability distribution of $\p_\alpha$. $\p_\alpha$ has a probability mass $\alpha \in [0, \delta)$ at the origin, and has a uniform distribution over $[-\frac{1-\alpha}{\delta-\alpha}\frac{\D}{2}, \frac{1-\alpha}{\delta-\alpha}\frac{\D}{2}] \setminus \{0\}$ with probability density $\frac{\delta - \alpha}{\D}$.}
\label{fig:p_alpha}
\end{figure}

Let $\sP$ denote the set of all symmetric and monotonically decreasing probability distributions satisfying the $(0,\delta)$-differential privacy \eqref{eq:mainconstraint}.
Our main result on the optimal noise probability distribution is:
\begin{theorem}\label{thm:main}
If the cost function $\loss(x)$ satisfies Property \ref{property1}, then for any $\D >0$ and $0 < \delta < 1$,
\begin{align*}
    \inf_{\p \in \sP}  \int_{x \in \R} \loss (x)  \p(dx)  = \inf_{\alpha \in [0, \delta)}  \int_{x \in \R} \loss (x)  \p_\alpha(dx).
\end{align*}
\end{theorem}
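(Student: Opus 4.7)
The plan is to reformulate the functional optimization \eqref{eqn:objecinopt} as a linear optimization over a non-negative Borel measure $\nu$ on $[0, \infty)$, by writing the survival function of $\p$ as the second antiderivative of $\nu$, and then to use convexity and Jensen's inequality to collapse $\nu$ to a single point mass, which will correspond exactly to $\p_\alpha$ for some $\alpha$. The inequality $\inf_{\alpha \in [0, \delta)} \int \loss \, d\p_\alpha \ge \inf_{\p \in \sP} \int \loss \, d\p$ is immediate since every $\p_\alpha \in \sP$ by Theorem \ref{thm:sufficient-necessary}; the substance is the reverse direction, for which I will show that for each $\p \in \sP$ with $\int \loss \, d\p < \infty$ (otherwise the inequality is trivial) there exists $\alpha \in [0, \delta)$ with $\int \loss \, d\p_\alpha \le \int \loss \, d\p$.

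For $\p \in \sP$ set $G_\p(t) := 1 - \p([-t, t])$ for $t \ge 0$. By Lemma \ref{lem:no-singular-mass} and the monotonicity of $\p$, $G_\p$ is convex and non-increasing on $[0, \infty)$ with $G_\p(0) = 1 - \alpha$ (where $\alpha := \p(\{0\})$) and $G_\p(\infty) = 0$; Theorem \ref{thm:sufficient-necessary} translates the DP constraint into $G_\p(\D/2) \ge 1 - \delta$. The standard representation of convex non-increasing functions vanishing at infinity yields a unique non-negative Borel measure $\nu$ on $[0, \infty)$ with
\begin{align*}
    G_\p(t) = \int_t^\infty (s - t) \, d\nu(s),
\end{align*}
so the constraints become $\int s \, d\nu = 1 - \alpha \le 1$ and $\int_{[\D/2, \infty)} (s - \D/2) \, d\nu \ge 1 - \delta$. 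A Stieltjes integration by parts followed by Fubini then rewrites the cost as
\begin{align*}
    \int_\R \loss(x) \, d\p(x) = \loss(0) + \int_0^\infty \Phi(s) \, d\nu(s), \quad \text{with } \Phi(s) := \int_0^s (\loss(t) - \loss(0)) \, dt;
\end{align*}
by Property \ref{property1}, $\Phi$ is non-negative and convex on $[0, \infty)$.

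Next I collapse $\nu$ to a single point mass in two steps. First, because $\Phi \ge 0$ and the DP constraint involves only $\nu|_{[\D/2, \infty)}$, replacing $\nu$ by its restriction to $[\D/2, \infty)$ weakly decreases $\int \Phi \, d\nu$ while preserving the DP constraint and only slackening $\int s \, d\nu \le 1$; hence I may assume $\nu$ is supported on $[\D/2, \infty)$. Second, with $m := \nu([\D/2, \infty)) > 0$ and barycentre $\bar s := m^{-1} \int s \, d\nu \ge \D/2$, Jensen's inequality for the convex $\Phi$ gives $\int \Phi \, d\nu \ge m \Phi(\bar s)$, while the point-mass measure $\nu' := m \delta_{\bar s}$ satisfies $\int s \, d\nu' = m \bar s = \int s \, d\nu$ and $\int (s - \D/2)\, d\nu' = m(\bar s - \D/2) = \int (s - \D/2)\,d\nu$, so both constraints remain valid.

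Finally, a point-mass $\nu = m \delta_{s^*}$ with $s^* \ge \D/2$, $m s^* \le 1$, and $m(s^* - \D/2) \ge 1 - \delta$ induces $G_\p(t) = m(s^* - t)_+$, i.e., the survival function of the distribution with atom $1 - m s^*$ at the origin and uniform density $m/2$ on $[-s^*, s^*] \setminus \{0\}$. These two constraints force $s^* \ge \D/(2\delta)$, and further tightening the DP constraint by replacing $m$ with $(1-\delta)/(s^* - \D/2)$ strictly decreases $m \Phi(s^*)$ while keeping $m s^* \le 1$; this reproduces $\p_\alpha$ with $\alpha = (\delta s^* - \D/2)/(s^* - \D/2) \in [0, \delta)$ and $L_\alpha = s^*$, and as $s^*$ sweeps $[\D/(2\delta), \infty)$ the parameter $\alpha$ sweeps $[0, \delta)$. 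I expect the main technical hurdle to be the clean justification of the integral representation of $G_\p$ by $\nu$ and the Stieltjes integration by parts under only the finite-cost hypothesis, which I would handle by standard truncation and monotone-convergence arguments for convex non-increasing functions.
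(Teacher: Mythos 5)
Your proposal is correct, and it takes a genuinely different route from the paper. The paper first approximates an arbitrary $\p \in \sP$ by distributions with piecewise-constant symmetric densities plus an atom at the origin (invoking a Riemann--Stieltjes limiting argument in the style of Lemma 20 of \cite{GV_IT_Epsilon}), and then performs three explicit mass rearrangements -- flattening the tail on $[\frac{\D}{2}, t^*]$, flattening $(0,\frac{\D}{2})$ while pushing the surplus to the origin, and tightening $\p([-\frac{\D}{2},\frac{\D}{2}])$ up to $\delta$ -- each shown not to increase the cost while preserving the constraint of Theorem \ref{thm:sufficient-necessary}. You instead linearize: the survival function $G_\p(t)=1-\p([-t,t])$ is convex and non-increasing (this follows from Property \ref{property:noise-decreasing} via midpoint convexity of $t\mapsto\p((t,\infty))$ plus measurability -- worth a line of justification), so it is the second antiderivative of a measure $\nu$, the cost becomes $\loss(0)+\int\Phi\,d\nu$ with $\Phi$ convex and non-negative, and both the normalization and the DP constraint of Theorem \ref{thm:sufficient-necessary} become linear in $\nu$; restriction to $[\frac{\D}{2},\infty)$ plus Jensen then collapses $\nu$ to a point mass, which you correctly identify as $\p_\alpha$ with $\alpha=(\delta s^*-\frac{\D}{2})/(s^*-\frac{\D}{2})\in[0,\delta)$ and support radius $s^*$. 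Your approach buys a cleaner treatment of general $\p\in\sP$ in one shot, avoiding the discretization and limit exchange that the paper only sketches, and it makes structurally transparent why the optimizer is an extreme point of the feasible set; the paper's rearrangement argument is more elementary and visual but leans on the approximation step. Two cosmetic points: ``strictly decreases'' when lowering $m$ should be ``weakly decreases'' (e.g.\ if $\Phi(s^*)=0$), and the representation of $G_\p$ uses its continuity at $0^+$, which holds since $\p(\{0\})$ is the only possible atom by Lemma \ref{lem:no-singular-mass}. Neither affects correctness.
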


\begin{proof}
First note that for any $\alpha \in [0, \delta)$, $\p_\alpha$ is symmetric and monotonically decreasing in $[0, +\infty)$, and
\begin{align*}
 \p_\alpha([-\frac{\D}{2}, \frac{\D}{2}]) = \alpha + \frac{\delta - \alpha}{\D} \D = \delta.  
\end{align*}

Therefore, due to Theorem \ref{thm:sufficient-necessary}, $\p_\alpha$ satisfies the $(0, \delta)$-differential privacy constraint \eqref{eq:mainconstraint}, and thus $\p_\alpha \in \sP$.

Applying a similar argument as in Lemma 20 of \cite{GV_IT_Epsilon}, we can use a sequence of symmetric and piece-wise linear probability density function with probability mass concentration in the origin to approximate any $\p \in \sP$ (see Figure \ref{fig:discrete}). More precisely, given a probability distribution $\p \in \sP$ which may have non-zero probability mass at $x=0$, for positive integer $i \in N$, define the probability distribution $\p_i$  as follows:
\begin{align*}
\p_i(\{0\}) := \p(\{0\})
\end{align*}
and over the set $\R \setminus \{0\}$, $\p_i$ has a symmetric probability density function $f_i(x)$ with
\begin{align*}
    f_i(x) = \begin{cases}
        a_k \triangleq \frac{ \p((k\frac{\D}{2i}, (k+1)\frac{\D}{2i}])}{\frac{\D}{2i}}  & x\in (k\frac{\D}{2i}, (k+1)\frac{\D}{2i} ], \\ 
        f_i(-x) &   x < 0
    \end{cases}     
\end{align*}
It is easy to see that $\p_i([-\frac{\D}{2}, \frac{\D}{2}]) = \p([-\frac{\D}{2}, \frac{\D}{2}]) \le \delta$, and thus due to Theorem \ref{thm:sufficient-necessary}, $\p_i \in \sP$. Due to the definition of Riemann-Stieltjes integral, we have
\begin{align*}
    \lim_{i \to +\infty}   \int_{x \in \R} \loss (x)  \p_i(dx)   =  \int_{x \in \R} \loss (x)  \p(dx)
\end{align*}

\begin{figure}[H]
\centering
\includegraphics[width=0.45\textwidth]{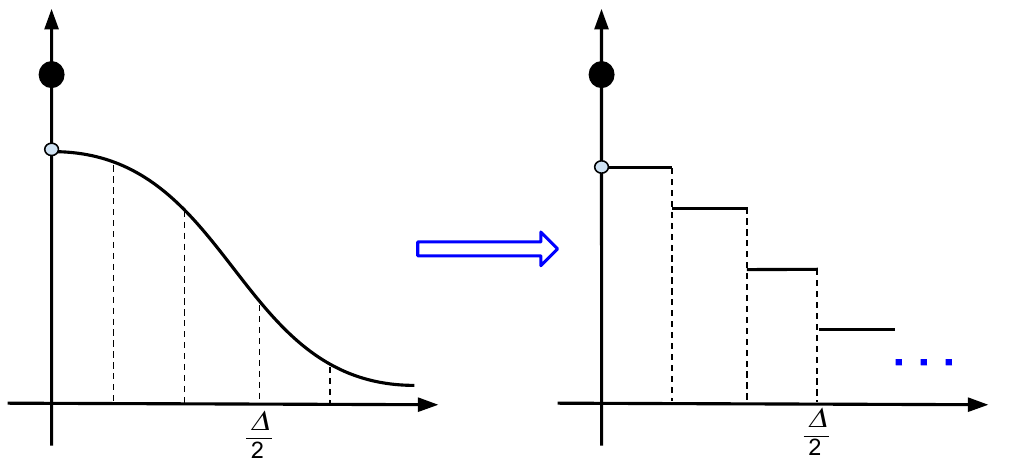}
\caption{Discretize a probability distribution with a piecewise-constant probability density function and a probability mass at the origin.}
\label{fig:discrete}
\end{figure}

Therefore, we only need to consider probability distributions with a probability mass on the origin and a symmetric monotonically decreasing piecewise-constant probability density function on $\R \setminus \{0\}$.

First we show that without loss of generality, we can assume the probability density function in $[\frac{\D}{2}, +\infty)$ is a step function, i.e., there exists a $t^* > \frac{\D}{2}$ such that probability density function is a constant in $[\frac{\D}{2}, t^*]$ and is zero in $(t^*, +\infty)$. Indeed, we can re-arrange the probability distribution in $[\frac{\D}{2}, +\infty)$ to make the probability density function to be uniform within certain interval $[\frac{\D}{2}, t^*]$ with the probability density the same as the previous bucket (see Figure \ref{fig:t_star}). This will not increase the cost, due to the fact that $\loss(\cdot)$ is a monotonically increasing function on $[0, +\infty)$. Since we are not changing the probability distribution in $[-\frac{\D}{2}, \frac{\D}{2}]$, due to Theorem \ref{thm:sufficient-necessary}, the probability distribution after the re-arrangement also satisfies the differential privacy constraint \eqref{eq:mainconstraint}.  

\begin{figure}[H]
\centering
\includegraphics[width=0.40\textwidth]{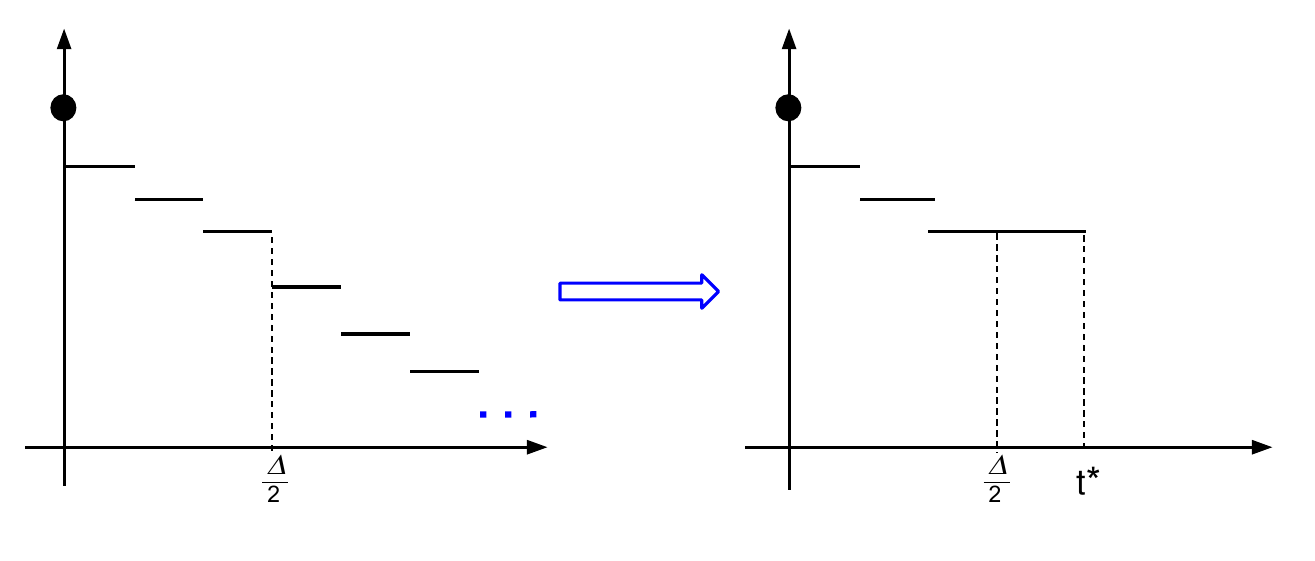}
\caption{Re-arrange the probability distribution in $[\frac{\D}{2}, +\infty)$ to be a step.}
\label{fig:t_star}
\end{figure}

Then we show that the probability distribution in $(0, \frac{\D}{2})$ shall be uniform as well. Indeed, if the distribution in $(0, \frac{\D}{2})$ is not uniform, we can decrease the probability density over $(0,\frac{\D}{2})$ to be the same as the point $\frac{\D}{2}$, and move the extra probability mass to the origin point (see Figure \ref{fig:zero_mass}). Due to the fact that $\loss(\cdot)$ is a monotonically increasing function on $[0, +\infty)$, this will not increase the cost. As $\p([-\frac{\D}{2}, \frac{\D}{2}])$ is unchanged, the new probability distribution satisfies the differential privacy constraint \eqref{eq:mainconstraint}.

\begin{figure}[H]
\centering
\includegraphics[width=0.40\textwidth]{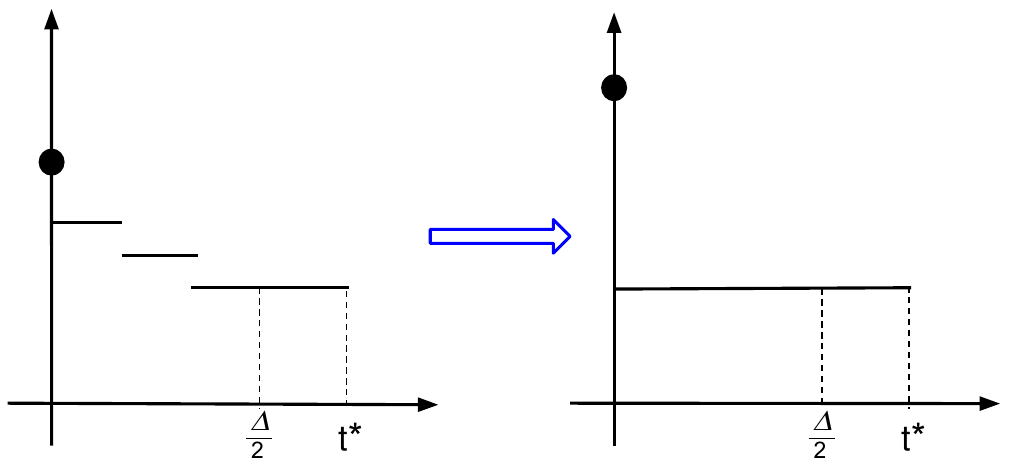}
\caption{Re-arrange the probability distribution in $(0, \frac{\D}{2})$ to be uniform and put the extra probability mass at the origin. }
\label{fig:zero_mass}
\end{figure}

In the last step, we show that $\p([-\frac{\D}{2}, \frac{\D}{2}])$ should be exactly $\delta$. If it is strictly less than $\delta$, then we can reduce the probability density over $(0, t^*)$ and increase $\alpha$ to make $\p([-\frac{\D}{2}, \frac{\D}{2}]) = \delta$. Similarly, due to the property of $\loss(\cdot)$ and Theorem \ref{thm:sufficient-necessary}, we conclude that this reduces the cost while preserving the differential privacy constraint.

This concludes the proof of Theorem \ref{thm:main}.
\end{proof}

A natural and simple algorithm to generate random noise with probability distribution $\p_\alpha$ is given in Algorithm \ref{algo:uniform_mech}.

\begin{algorithm}
\caption{Generation of a Random Variable of Uniform Distribution with a Probability Concentration on the Origin}
\label{algo:uniform_mech}
\begin{algorithmic}
\State \textbf{Input: } $\delta$, $\D$, and  $\alpha \in [0,\delta)$.
\State \textbf{Output: } $\noise$, a random variable of uniform distribution with a probability concentration on the origin, paremeterized by $\delta, \D$ and $\alpha$.
\\
\State Generate a binary random variable $B$ with 
\begin{align*}
\text{Pr}[B = 0] &= \alpha, \\
\text{Pr}[B = 1] & = 1 - \alpha.
\end{align*}
\State Generate a random variable $U$ uniformly distributed in $[-\frac{1-\alpha}{\delta-\alpha}\frac{\D}{2}, \frac{1-\alpha}{\delta-\alpha}\frac{\D}{2}]$.
\State $\noise \gets B \cdot U$.
\State Output $\noise$.
\end{algorithmic}
\end{algorithm}

\section{Applications} \label{sec:application}
In this section, we apply our main result Theorem \ref{thm:main} to derive an explicit expression for the parameter $\alpha$ in the optimal noise probability distribution $\p_\alpha$ for the class of $\ell^p$ momentum cost functions in Theorem \ref{thm:noise-general-n}. Applying Theorem \ref{thm:noise-general-n} to the cases $p=1$ and $p=2$, we get the optimal noise probability distribution with minimum noise amplitude and minimum noise power for $(0,\delta)$-differential privacy, respectively.

Let $V(\p) := \int_{x \in \R} \loss (x)  \p(dx)$, i.e., $V(\p)$ denote the expectation of the cost given the noise probability distribution $\p$ for the cost function $\loss(\cdot)$.

\begin{theorem}\label{thm:noise-general-n}
Given $0 < \delta <1$ and the query sensitivity $\D > 0$. For the general momentum cost function $\ell^p(x) := |x|^p$, where $p > 0$, the optimal noise probability distribution to preserve $(0, \delta)$-differential privacy with query sensitivity $\D$ is $\p_\alpha*$ with
\begin{align*}
\alpha^*=
    \begin{cases}
      0,           & \text{for } \delta \in (0, \frac{p}{p+1}] \\
      (p+1)\delta - p, & \text{for } \delta \in (\frac{p}{p+1}, 1)
    \end{cases}
\end{align*}
and the minimum cost is
\begin{align*}
V(\p_{\alpha^*}) =  
    \begin{cases}
      \frac{\D^p}{2^p (p+1)\delta^p}, & \text{for } \delta \in (0, \frac{p}{p+1}] \\
      \frac{(p+1)^{p}}{2^p p^p} (1-\delta) \D^p,  & \text{for }  \delta \in (\frac{p}{p+1}, 1)
    \end{cases}
\end{align*}

\end{theorem}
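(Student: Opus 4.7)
The plan is to apply Theorem \ref{thm:main} to reduce the problem to a one-dimensional optimization over the parameter $\alpha \in [0,\delta)$, then minimize the resulting expression by standard calculus, and finally check the boundary case. Since $\loss_n(x) = |x|^n$ satisfies Property \ref{property1}, Theorem \ref{thm:main} tells us that $\inf_{\p \in \sP} V(\p) = \inf_{\alpha \in [0,\delta)} V(\p_\alpha)$, so the whole problem collapses to computing and minimizing $V(\p_\alpha)$ as a function of $\alpha$.

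The first step is an explicit computation. Writing $L_\alpha := \frac{1-\alpha}{\delta-\alpha} \cdot \frac{\D}{2}$, the distribution $\p_\alpha$ is a point mass of weight $\alpha$ at the origin plus a uniform density $\frac{\delta-\alpha}{\D}$ on $[-L_\alpha, L_\alpha]$. Because $|x|^n$ vanishes at the origin, the point mass contributes nothing, and a direct integration gives
\begin{align*}
V(\p_\alpha) \;=\; \frac{\delta-\alpha}{\D}\cdot\frac{2 L_\alpha^{n+1}}{n+1} \;=\; \frac{\D^n}{2^n(n+1)}\cdot\frac{(1-\alpha)^{n+1}}{(\delta-\alpha)^{n}}.
\end{align*}
So up to the multiplicative constant $\frac{\D^n}{2^n(n+1)}$, the task reduces to minimizing $f(\alpha) := (1-\alpha)^{n+1}/(\delta-\alpha)^n$ on $[0,\delta)$.

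The second step is the minimization. Taking the logarithmic derivative gives
\begin{align*}
\frac{f'(\alpha)}{f(\alpha)} \;=\; \frac{n-(n+1)\delta+\alpha}{(1-\alpha)(\delta-\alpha)},
\end{align*}
whose denominator is positive on $[0,\delta)$. Hence the unique stationary point is $\alpha_0 := (n+1)\delta - n$, and $f$ is strictly decreasing for $\alpha < \alpha_0$ and strictly increasing for $\alpha > \alpha_0$. Splitting on the sign of $\alpha_0$: if $\delta \le \frac{n}{n+1}$, then $\alpha_0 \le 0$, so $f$ is nondecreasing on $[0,\delta)$ and the minimum is attained at $\alpha^* = 0$; if $\delta > \frac{n}{n+1}$, then $\alpha_0 \in (0,\delta)$ (since $\alpha_0 < \delta$ reduces to $\delta < 1$), so $\alpha^* = \alpha_0 = (n+1)\delta - n$. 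Plugging $\alpha^*=0$ gives $V(\p_0) = \frac{\D^n}{2^n(n+1)\delta^n}$. For the other case, using $1-\alpha^* = (n+1)(1-\delta)$ and $\delta-\alpha^* = n(1-\delta)$ yields $f(\alpha^*) = (n+1)^{n+1}(1-\delta)/n^n$, which combines with the prefactor to produce $V(\p_{\alpha^*}) = \frac{(n+1)^n}{2^n n^n}(1-\delta)\D^n$, matching the claimed formulas.

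The main obstacle is essentially bookkeeping rather than conceptual: keeping the exponents on $(1-\alpha)$, $(\delta-\alpha)$, and $L_\alpha$ straight, and confirming that the two pieces of the minimum agree at $\delta = \frac{n}{n+1}$ (both evaluate to $\frac{\D^n (n+1)}{2^n n^n}\cdot\frac{1}{n+1}$ after simplification), which is a useful sanity check. No further obstacle arises, since Theorem \ref{thm:main} has already done the hard reduction to the one-parameter family.
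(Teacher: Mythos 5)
Your proposal is correct and follows essentially the same route as the paper: invoke Theorem \ref{thm:main}, compute $V(\p_\alpha)=\frac{\D^n}{2^n(n+1)}\cdot\frac{(1-\alpha)^{n+1}}{(\delta-\alpha)^n}$, and minimize over $\alpha\in[0,\delta)$ by calculus; your sign analysis of $f'/f$, showing $f$ is decreasing then increasing about $\alpha_0=(n+1)\delta-n$, is a mild streamlining of the paper's argument, which instead compares $f(0)$ and $f(\alpha_0)$ directly via the inequality $\delta^n(1-\delta)\le \frac{n^n}{(n+1)^{n+1}}$. The only blemish is the parenthetical boundary check: at $\delta=\frac{n}{n+1}$ the common value is $\frac{(n+1)^{n-1}}{2^n n^n}\D^n$ rather than $\frac{\D^n(n+1)}{2^n n^n}\cdot\frac{1}{n+1}$ (you dropped an exponent on $(n+1)$), but this aside does not affect the proof.
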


\begin{proof}
It is easy to see that the momentum cost function satisfies Property \ref{property1}. We can compute the cost $V(\p_\alpha)$ via
\begin{align*}
  V(\p_\alpha) &= \int_{x \in \R} |x|^p \p_\alpha(dx) \\
  &= 2 \int_{0}^{\frac{1-\alpha}{\delta-\alpha}\frac{\D}{2}} x^p \frac{\delta - \alpha}{\D}dx \\
  &= \frac{\D^p}{(p+1)2^p}\frac{(1-\alpha)^{p+1}}{(\delta-\alpha)^p}.
\end{align*}

Define $f(\alpha) := \frac{(1-\alpha)^{p+1}}{(\delta-\alpha)^p}$.
As $f(\alpha)$ is a continuous function of $\alpha$ over $[0, \delta)$, and $f(0) = \frac{1}{\delta^p}$, and $f(\delta) = \infty$, the minimum is achieved at either $\alpha = 0$ or the point where the derivative $f'(\alpha) = 0$.
 
Compute the derivative of $f(\alpha)$ via
\begin{align*}
  &\;\;\;\;\; f'(\alpha) \nonumber \\
  &= \frac{-(p+1)(1-\alpha)^p(\delta-\alpha)^p + (1-\alpha)^{p+1}p(\delta-\alpha)^{p-1}}{(\delta - \alpha)^{2p}} \\
  &= \frac{(\delta-\alpha)^{p-1}(1-\alpha)^p}{(\delta-\alpha)^{2p}} (\alpha - (p+1)\delta + p).
\end{align*}

Set $f'(\alpha) = 0$ and we get $\alpha = (p+1)\delta - p$.

It is ready to calculate that 
\begin{align*}
V(\p_{\alpha}) =  
    \begin{cases}
      \dfrac{\D^p}{2^p (p+1)\delta^p}, & \text{for } \alpha=0 \\
      \dfrac{(p+1)^{p}}{2^p p^p} (1-\delta) \D^p,       & \text{for } \alpha = (p+1)\delta - p
    \end{cases}
\end{align*}

It is easy to see that when $\delta \le \frac{p}{p+1}$, at the point where the derivative is zero we have $\alpha = (p+1)\delta - p < 0$,
and thus the minimum is achieved at $\alpha = 0$. When $\delta > \frac{p}{p+1}$, $(p+1)\delta -p  \in (0, \delta)$, and we have 
$V(\p_{0})  \ge V(\p_{(p+1)\delta-p})$. Indeed,
\begin{align}
&\;\;\;\;\; V(\p_{0})  \ge V(\p_{(p+1)\delta-p})  \nonumber \\
& \Leftrightarrow  \frac{\D^p}{2^p (p+1)\delta^p} \ge \frac{(p+1)^{p}}{2^p p^p} (1-\delta) \D^p \nonumber \\
& \Leftrightarrow  \frac{p^p}{(p+1)^{p+1}} \ge \delta^p(1-\delta), \label{eq:inequiality}
\end{align}
where \eqref{eq:inequiality} holds as 
\begin{align*}
\delta^p(1-\delta) & =  \frac{\delta^p(p-p\delta)}{p} \\
& \le \frac{(\frac{(p\delta + p - p\delta)}{p+1})^{p+1}}{p} \\
& = \frac{p^p}{(p+1)^{p+1}}.
\end{align*}
Therefore, when $\delta > \frac{p}{p+1}$, the minimum of $f(\alpha)$ is achieved at $\alpha = (p+1)\delta - p$.

In conclusion, for the $\ell^p$ cost function, the optimal $\alpha$ is
\begin{align*}
\alpha^*=
  \begin{cases}
      0,           & \text{for }  \delta \in (0, \frac{p}{p+1}]  \\
      (p+1)\delta - p, & \text{for }  \delta \in (\frac{p}{p+1}, 1)
    \end{cases}
\end{align*}
and the minimum cost is
\begin{align*}
V(\p_{\alpha^*}) =  
    \begin{cases}
      \frac{\D^p}{2^p (p+1)\delta^p}, & \text{for }  \delta \in (0, \frac{p}{p+1}] \\
      \frac{(p+1)^{p}}{2^p p^p} (1-\delta) \D^p, & \text{for } \delta \in (\frac{p}{p+1}, 1)
    \end{cases}
\end{align*}
\end{proof}

Applying Theorem \ref{thm:noise-general-n} to the cases where $p=1$ and $p=2$, we derive the optimal noise probability distribution for $(0,\delta)$-differential privacy with minimum noise amplitude and minimum noise power, respectively.

\begin{corollary}[Optimal Noise Amplitude]\label{cor:noise-amplitude}
Given $0 < \delta <1$ and the query sensitivity $\D > 0$, to minimize the expectation of the amplitude of the noise (i.e., for the $\ell^1$ cost function), the optimal noise probability distribution is $\p_\alpha*$ with
\begin{align*}
\alpha^*=
    \begin{cases}
      0,           & \text{for } \delta \in (0, \frac{1}{2}] \\
      2\delta - 1, & \text{for } \delta \in (\frac{1}{2}, 1)
    \end{cases}
\end{align*}
and the minimum expectation of noise amplitude is
\begin{align}
V(\p_{\alpha^*}) =  
    \begin{cases}
      \frac{\D}{4\delta}, & \text{for } \delta \in (0, \frac{1}{2}] \\
      (1-\delta)\D,       & \text{for } \delta \in (\frac{1}{2}, 1)
    \end{cases}\label{eq:optimal-noise-l1}
\end{align}
\end{corollary}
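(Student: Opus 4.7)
The plan is to obtain Corollary \ref{cor:noise-amplitude} as a direct specialization of Theorem \ref{thm:noise-general-n} to $n = 1$, since the $\ell^1$ cost function $\loss_1(x) = |x|$ manifestly satisfies Property \ref{property1} (symmetry in $x$ and monotonicity on $[0,+\infty)$). There is essentially no new content to prove; the work reduces to evaluating the closed-form expressions in Theorem \ref{thm:noise-general-n} at $n = 1$.

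First, I would verify the threshold: the cutoff $\frac{n}{n+1}$ becomes $\frac{1}{2}$ when $n = 1$, so the two regimes $\delta \in (0, \tfrac{1}{2}]$ and $\delta \in (\tfrac{1}{2}, 1)$ in the corollary match exactly those of the theorem. Then I would substitute $n=1$ into the optimal parameter from Theorem \ref{thm:noise-general-n}, giving $\alpha^* = (n+1)\delta - n = 2\delta - 1$ in the high-$\delta$ regime and $\alpha^* = 0$ in the high-privacy regime, which matches the claimed formula for $\alpha^*$.

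Finally, I would substitute $n = 1$ into the minimum-cost expression. In the regime $\delta \in (0, \tfrac{1}{2}]$, the formula $\frac{\D^n}{2^n (n+1)\delta^n}$ reduces to $\frac{\D}{4\delta}$. In the regime $\delta \in (\tfrac{1}{2}, 1)$, the formula $\frac{(n+1)^n}{2^n n^n}(1-\delta)\D^n$ reduces to $\frac{2}{2}(1-\delta)\D = (1-\delta)\D$. These are exactly the two branches of $V(\p_{\alpha^*})$ stated in \eqref{eq:optimal-noise-l1}, completing the argument.

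Since the derivation is purely substitution with no additional optimization or case analysis, I do not expect any obstacle; the only thing to double-check is the trivial arithmetic $\frac{(n+1)^n}{2^n n^n}\big|_{n=1} = 1$ that makes the second-regime cost collapse cleanly to $(1-\delta)\D$.
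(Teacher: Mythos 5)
Your proposal is correct and matches the paper exactly: the paper also obtains Corollary \ref{cor:noise-amplitude} purely by specializing Theorem \ref{thm:noise-general-n} to $n=1$, with no additional argument. The arithmetic checks out ($\frac{n}{n+1}=\frac12$, $\alpha^*=2\delta-1$, $\frac{\D}{4\delta}$, and $(1-\delta)\D$), so there is nothing to add.
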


\begin{corollary}[Optimal Noise Power]\label{cor:noise-power}
Given $0 < \delta <1$ and the query sensitivity $\D > 0$, to minimize the expectation of the power of the noise (i.e., for the $\ell^2$ cost function), the optimal noise probability distribution is $\p_\alpha*$ with
\begin{align*}
\alpha^*=
    \begin{cases}
      0,           & \text{for } \delta \in (0, \frac{2}{3}] \\
      3\delta - 2, & \text{for } \delta \in (\frac{2}{3}, 1)
    \end{cases}
\end{align*}
and the minimum expectation of noise power is
\begin{align}
V(\p_{\alpha^*}) =  
    \begin{cases}
      \frac{\D^2}{12\delta^2},     & \text{for } \delta \in (0, \frac{2}{3}] \\
      \frac{9}{16}(1-\delta)\D^2 , & \text{for } \delta \in (\frac{2}{3}, 1)
    \end{cases}\label{eq:optimal-noise-l2}
\end{align}
\end{corollary}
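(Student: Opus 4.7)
The plan is to obtain Corollary \ref{cor:noise-power} as an immediate specialization of Theorem \ref{thm:noise-general-n} to the case $n=2$. First I would verify that the $\ell^2$ cost function $\loss_2(x) = x^2$ satisfies Property \ref{property1}, which is immediate since $x^2$ is symmetric about the origin and monotonically increasing on $[0,+\infty)$. Hence Theorem \ref{thm:noise-general-n} applies with $n=2$.

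Next I would substitute $n=2$ into the expressions for the optimal parameter and the minimum cost given in Theorem \ref{thm:noise-general-n}. The threshold becomes $\frac{n}{n+1} = \frac{2}{3}$, and the non-trivial optimal parameter becomes $(n+1)\delta - n = 3\delta - 2$. Plugging $n=2$ into the first branch of the minimum cost gives
\begin{equation*}
\frac{\D^n}{2^n(n+1)\delta^n} \;=\; \frac{\D^2}{4 \cdot 3 \cdot \delta^2} \;=\; \frac{\D^2}{12\delta^2},
\end{equation*}
and into the second branch
\begin{equation*}
\frac{(n+1)^n}{2^n n^n}(1-\delta)\D^n \;=\; \frac{9}{4\cdot 4}(1-\delta)\D^2 \;=\; \frac{9}{16}(1-\delta)\D^2,
\end{equation*}
which match precisely the expressions in the statement of Corollary \ref{cor:noise-power}.

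Since the proof is a routine substitution with no non-trivial step, there is essentially no obstacle: the arithmetic simplifications above exhaust the argument, and the corollary follows directly from Theorem \ref{thm:noise-general-n}.
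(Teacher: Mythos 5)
Your proposal is correct and matches the paper exactly: the corollary is obtained as the direct specialization of Theorem \ref{thm:noise-general-n} to $n=2$, and your arithmetic ($2^2(2+1)=12$ and $(2+1)^2/(2^2\cdot 2^2)=9/16$) is right. No further comment is needed.
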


\section{Comparison to Gaussian Mechanism} \label{sec:comparison}
In this section, we compare our result with the classic Gaussian mechanism, which adds a Gaussian noise to preserve $(\e, \delta)$-differential privacy. We show a two-fold and three-fold improvement in the high privacy regime over the improved Gaussian mechanism in \cite{icmlGaussian} for minimizing the noise magnitude and the noise power, and we show that the gain is more pronounced in the low privacy regime. 

A classic result on the Gaussian mechanism is that for any $\e, \delta \in (0,1)$, adding a Gaussian noise with standard deviation $\sigma = \frac{\sqrt{2 \log (1.25/\delta)}}{\e}\D$ preserves $(\e, \delta)$-differential privacy \citep{DPbook}. This result does not apply to the $(0, \delta)$-differential privacy, as this would require $\sigma$ to be $+\infty$ when $\e = 0$. 

For $(\e,\delta)$-differential privacy, \cite{icmlGaussian} developed an optimal Gaussian mechanism whose variance is calibrated directly using the Gaussian cumulative density function instead of a tail bound approximation. \cite{icmlGaussian} show the following:
\begin{theorem}[Theorem 2 in \cite{icmlGaussian}]
A Gaussian output perturbation mechanism with $\sigma=\frac{\D}{2\delta}$ preserves $(0, \delta)$-differential privacy.
\end{theorem}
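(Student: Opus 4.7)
The plan is to reduce the Gaussian-mechanism claim to the sufficient-and-necessary condition established in Theorem \ref{thm:sufficient-necessary}. The Gaussian density $\tfrac{1}{\sigma\sqrt{2\pi}}\exp(-x^2/(2\sigma^2))$ is symmetric and strictly decreasing in $|x|$, and this pointwise monotonicity immediately upgrades to the set-level inequality of Property \ref{property:noise-decreasing} (by integrating a monotonically decreasing density over any nonnegative set and its nonnegative shift). Hence the Gaussian noise distribution lies in $\pSymMon$, and by Theorem \ref{thm:sufficient-necessary} the mechanism with noise $\noise \sim N(0,\sigma^2)$ preserves $(0,\delta)$-differential privacy if and only if $\p([-\tfrac{\D}{2}, \tfrac{\D}{2}]) \le \delta$.

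The next step is to express this probability through the standard normal CDF $\Phi$. For $\noise \sim N(0,\sigma^2)$,
\[ \Pr[|\noise| \le \D/2] \;=\; 2\Phi\!\bigl(\tfrac{\D}{2\sigma}\bigr) - 1. \]
Substituting the prescribed value $\sigma = \D/(2\delta)$ gives $\D/(2\sigma) = \delta$, so the privacy requirement reduces to the scalar inequality
\[ \Phi(\delta) \;\le\; \tfrac{1+\delta}{2}, \qquad \delta \in (0,1). \]

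Finally, I would verify this inequality by a short monotonicity argument. Define $g(\delta) := \tfrac{1+\delta}{2} - \Phi(\delta)$, so that $g(0) = 0$ and $g'(\delta) = \tfrac{1}{2} - \varphi(\delta)$, where $\varphi$ denotes the standard normal density. Since $\varphi$ attains its global maximum at the origin with $\varphi(0) = 1/\sqrt{2\pi} < 1/2$, we have $g'(\delta) > 0$ for every $\delta \ge 0$, whence $g(\delta) \ge 0$ on $[0,1)$, which finishes the argument.

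The only place where care is needed is the very first reduction: one must confirm that the set-level monotonicity of Property \ref{property:noise-decreasing} is inherited from the pointwise monotonicity of the Gaussian density, and that Theorem \ref{thm:sufficient-necessary} applies to distributions without an atom at the origin (the lemma on singular masses is vacuous here). After that, the entire argument collapses to a one-line estimate on $\Phi$, so I anticipate no genuine obstacle.
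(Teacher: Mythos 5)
Your argument is correct, but note that the paper itself does not prove this statement at all --- it is imported verbatim as Theorem~2 of \cite{icmlGaussian}, whose proof proceeds by an exact characterization of the Gaussian mechanism's privacy loss in terms of the Gaussian CDF. What you do instead is derive the claim internally from the paper's own machinery: the Gaussian density is symmetric and pointwise decreasing on $[0,+\infty)$, which (by the change of variables $\p(S+a)=\int_S f(x+a)\,dx \le \int_S f(x)\,dx$) gives the set-level monotonicity of Property~\ref{property:noise-decreasing}, so Theorem~\ref{thm:sufficient-necessary} applies and the whole question reduces to $\Pr[|\noise|\le \D/2] = 2\Phi(\D/(2\sigma))-1 \le \delta$; with $\sigma=\D/(2\delta)$ this is $2\Phi(\delta)-1\le\delta$, which your monotonicity argument ($g'(\delta)=\tfrac12-\varphi(\delta)\ge \tfrac12 - \tfrac{1}{\sqrt{2\pi}}>0$, $g(0)=0$) verifies correctly, and only the sufficiency direction of Theorem~\ref{thm:sufficient-necessary} is needed. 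Interestingly, the two routes converge: the exact condition in \cite{icmlGaussian} specialized to $\e=0$ is precisely $\Phi(\D/(2\sigma))-\Phi(-\D/(2\sigma))\le\delta$, i.e.\ the same mass condition $\p([-\tfrac{\D}{2},\tfrac{\D}{2}])\le\delta$ that Theorem~\ref{thm:sufficient-necessary} isolates. Your version buys a self-contained proof that makes the comparison in Section~\ref{sec:comparison} independent of the external reference, and it makes visible that the Gaussian calibration $\sigma=\D/(2\delta)$ is slightly conservative (the inequality $2\Phi(\delta)-1\le\delta$ is strict for $\delta>0$, and indeed $2\Phi(\delta)-1\le\sqrt{2/\pi}\,\delta$); the citation's approach buys generality, since it covers all $\e\ge 0$ rather than only $\e=0$.
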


It is ready to see that the Gaussian noise distribution has an expected noise amplitude $\sigma=\frac{\D}{2\delta}$ and an expected noise power $\sigma^2=\frac{\D^2}{4\delta^2}$.
\begin{table}[H]
\centering
\caption{Noise Magnitude Comparison} \label{tab:guassian}
\begin{tabular}{c|c} 
 \hline
 \rule{0pt}{4ex}  Gaussian & $\frac{\D}{2\delta}$ \\
 \hline
 \rule{0pt}{4ex}  Optimal &   $\begin{cases}
    \frac{\D}{4\delta}, & \text{for } \delta \in (0, \frac{1}{2}] \\
    (1-\delta)\D,       & \text{for } \delta \in (\frac{1}{2}, 1)
  \end{cases}$ 
\end{tabular}
\end{table}

\begin{table}[H]
\centering
    \caption{Noise Power Comparison} \label{tab:guassian2}
    \begin{tabular}{c|c} 
     \hline
     \rule{0pt}{4ex}  Gaussian & $\frac{\D^2}{4\delta^2}$ \\
     \hline
     \rule{0pt}{4ex}  Optimal & $   \begin{cases}
        \frac{\D^2}{12\delta^2},     & \text{for } \delta \in (0, \frac{2}{3}] \\
        \frac{9}{16}(1-\delta)\D^2 , & \text{for } \delta \in (\frac{2}{3}, 1)
      \end{cases}$ 
    \end{tabular}
\end{table}

For comparison, our result \eqref{eq:optimal-noise-l1} and \eqref{eq:optimal-noise-l2} in this paper show that the minimum expected noise magnitude and noise power are $\frac{\D}{4\delta}$ and $\frac{\D^2}{12\delta^2}$ in the medium/high privacy regime ($\delta \le \frac{1}{2}$). Therefore, our result shows a two-fold and three-fold multiplicative gain over the improved Gaussian mechanism in \cite{icmlGaussian} for $(0, \delta)$-differential privacy in the high privacy regime for minimizing the noise magnitude and the noise power, respectively. 

In the low privacy regime, the gap is more pronounced: as $\delta \to 1$, the cost of the Gaussian mechanism converges to $\frac{\D}{2}$ and $\frac{\D^2}{4}$ for the noise magnitude and the noise power, while the cost of optimal noise from \eqref{eq:optimal-noise-l1} and \eqref{eq:optimal-noise-l2} converges to zero proportionally to $(1-\delta)$ .


We plot the ratio of the optimal noise magnitude and noise power over Gaussian mechanism in Fig.~\ref{fig:ratio}. We conclude that the derived optimal $(0,\delta)$-differential private mechanism in this work reduces the noise magnitude and noise power by $\frac{1}{2}$ and $\frac{2}{3}$ in the high privacy regime, and the improvement is more pronounced in the low privacy regime.

\begin{figure}[H]
    \centering
    \includegraphics[width=0.5\textwidth]{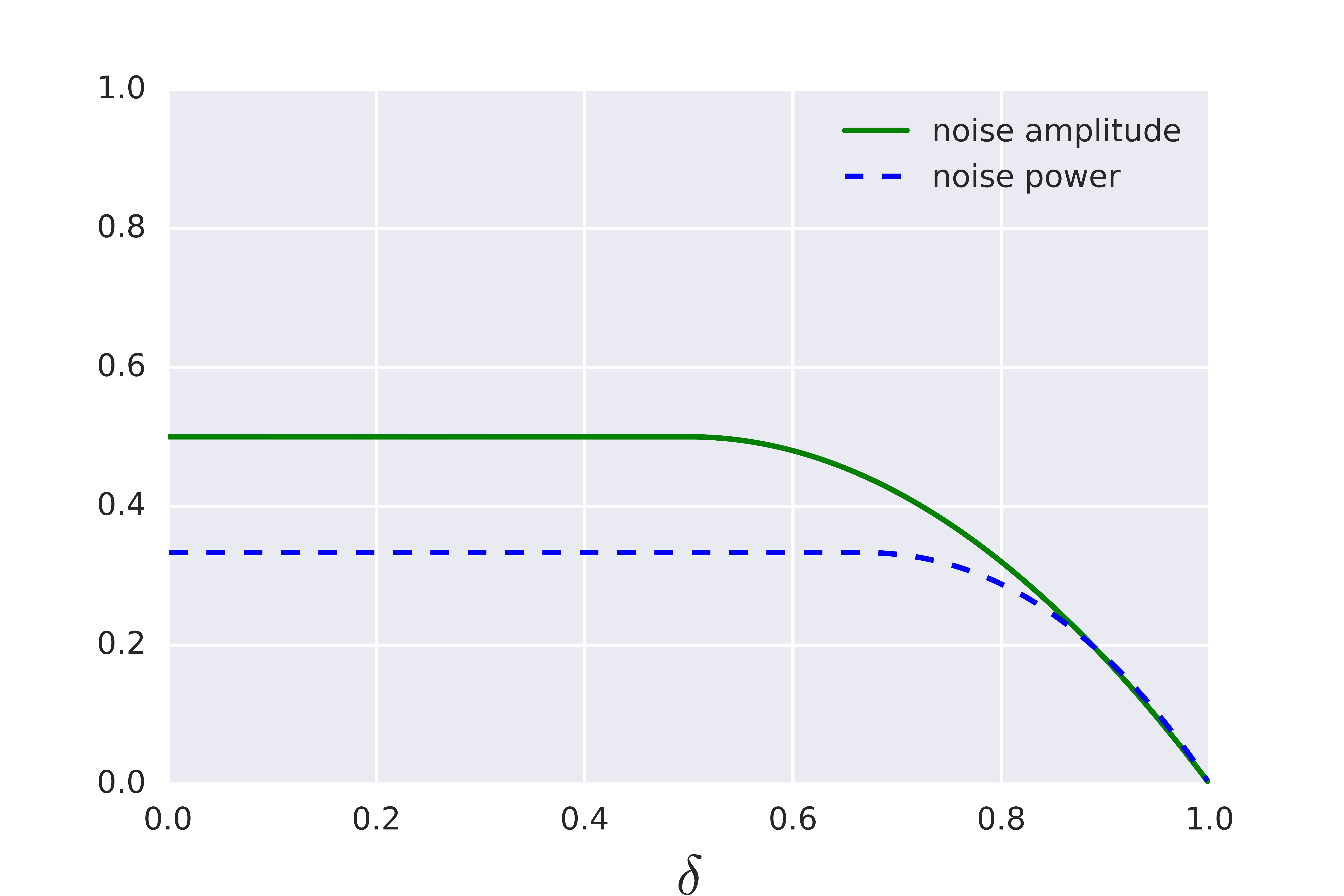}
    \caption{Ratio of the Optimal Noise Cost over the Improved Gaussian Mechanism.}
    \label{fig:ratio}
\end{figure}

\section*{Acknowledgment}\label{sec:acknowledgment}
We would like to thank the anonymous reviewers for their insightful comments and suggestions, which help us improve the presentation of this work.

\bibliographystyle{plainnat}
\bibliography{reference}

\begin{thebibliography}{27}
\providecommand{\natexlab}[1]{#1}
\providecommand{\url}[1]{\texttt{#1}}
\expandafter\ifx\csname urlstyle\endcsname\relax
  \providecommand{\doi}[1]{doi: #1}\else
  \providecommand{\doi}{doi: \begingroup \urlstyle{rm}\Url}\fi

\bibitem[Abadi et~al.(2016)Abadi, Chu, Goodfellow, McMahan, Mironov, Talwar,
  and Zhang]{Abadi2016}
Martin Abadi, Andy Chu, Ian Goodfellow, H.~Brendan McMahan, Ilya Mironov, Kunal
  Talwar, and Li~Zhang.
\newblock Deep learning with differential privacy.
\newblock In \emph{Proceedings of the 2016 ACM SIGSAC Conference on Computer
  and Communications Security}, CCS '16, pages 308--318. ACM, 2016.

\bibitem[Agarwal et~al.(2018)Agarwal, Suresh, Yu, Kumar, and
  McMahan]{Agarwal18}
Naman Agarwal, Ananda~Theertha Suresh, Felix Yu, Sanjiv Kumar, and Brendan
  McMahan.
\newblock cp{SGD}: Communication-efficient and differentially-private
  distributed {SGD}.
\newblock In \emph{Advances in Neural Information Processing Systems}. 2018.

\bibitem[Balle and Wang(2018)]{icmlGaussian}
Borja Balle and Yu-Xiang Wang.
\newblock Improving the {G}aussian mechanism for differential privacy:
  Analytical calibration and optimal denoising.
\newblock In \emph{Proceedings of the 35th International Conference on Machine
  Learning (ICML)}, 2018.

\bibitem[Chaudhuri and Monteleoni(2008)]{CM08}
Kamalika Chaudhuri and Claire Monteleoni.
\newblock {Privacy-preserving logistic regression}.
\newblock In \emph{Neural Information Processing Systems}, pages 289--296,
  2008.

\bibitem[Chaudhuri et~al.(2011)Chaudhuri, Monteleoni, and Sarwate]{ERM}
Kamalika Chaudhuri, Claire Monteleoni, and Anand~D. Sarwate.
\newblock Differentially private empirical risk minimization.
\newblock \emph{Journal of Machine Learning Research}, 12:\penalty0 1069--1109,
  2011.

\bibitem[Chaudhuri et~al.(2012)Chaudhuri, Sarwate, and Sinha]{PCA}
Kamalika Chaudhuri, Anand Sarwate, and Kaushik Sinha.
\newblock Near-optimal differentially private principal components.
\newblock In \emph{Advances in Neural Information Processing Systems 25}, pages
  989--997. 2012.

\bibitem[Duchi et~al.(2012)Duchi, Jordan, and Wainwright]{Duchi12}
John Duchi, Michael Jordan, and Martin Wainwright.
\newblock Privacy aware learning.
\newblock In \emph{Advances in Neural Information Processing Systems}, pages
  1430--1438, 2012.

\bibitem[Dwork(2008)]{DPsurvey}
Cynthia Dwork.
\newblock {Differential Privacy: A Survey of Results}.
\newblock In \emph{Theory and Applications of Models of Computation}, volume
  4978, pages 1--19, 2008.

\bibitem[Dwork and Roth(2014)]{DPbook}
Cynthia Dwork and Aaron Roth.
\newblock The algorithmic foundations of differential privacy.
\newblock \emph{Foundations and Trends in Theoretical Computer Science},
  9\penalty0 (3-4):\penalty0 211--407, 2014.

\bibitem[Dwork et~al.(2006{\natexlab{a}})Dwork, Kenthapadi, McSherry, Mironov,
  and Naor]{DKMMN06}
Cynthia Dwork, Krishnaram Kenthapadi, Frank McSherry, Ilya Mironov, and Moni
  Naor.
\newblock Our data, ourselves: privacy via distributed noise generation.
\newblock In \emph{Proceedings of the 24th annual international conference on
  The Theory and Applications of Cryptographic Techniques}, EUROCRYPT'06, pages
  486--503. Springer-Verlag, 2006{\natexlab{a}}.

\bibitem[Dwork et~al.(2006{\natexlab{b}})Dwork, McSherry, Nissim, and
  Smith]{DMNS06}
Cynthia Dwork, Frank McSherry, Kobbi Nissim, and Adam Smith.
\newblock Calibrating noise to sensitivity in private data analysis.
\newblock In \emph{Theory of Cryptography}, volume 3876 of \emph{Lecture Notes
  in Computer Science}, pages 265--284. Springer Berlin / Heidelberg,
  2006{\natexlab{b}}.

\bibitem[Ge et~al.(2018)Ge, Wang, Wang, and Liu]{PCA2}
Jason Ge, Zhaoran Wang, Mengdi Wang, and Han Liu.
\newblock Minimax-optimal privacy-preserving sparse pca in distributed systems.
\newblock In \emph{Proceedings of the Twenty-First International Conference on
  Artificial Intelligence and Statistics (AISTATS)}, 2018.

\bibitem[Geng and Viswanath(2014)]{GV_ISIT14}
Quan Geng and Pramod Viswanath.
\newblock The optimal mechanism in differential privacy.
\newblock In \emph{IEEE International Symposium on Information Theory (ISIT)},
  pages 2371--2375, June 2014.

\bibitem[Geng and Viswanath(2016{\natexlab{a}})]{GV_IT_Approximate}
Quan Geng and Pramod Viswanath.
\newblock Optimal noise adding mechanisms for approximate differential privacy.
\newblock \emph{IEEE Transactions on Information Theory}, 62\penalty0
  (2):\penalty0 952--969, Feb 2016{\natexlab{a}}.

\bibitem[Geng and Viswanath(2016{\natexlab{b}})]{GV_IT_Epsilon}
Quan Geng and Pramod Viswanath.
\newblock The optimal noise-adding mechanism in differential privacy.
\newblock \emph{IEEE Transactions on Information Theory}, 62\penalty0
  (2):\penalty0 925--951, Feb 2016{\natexlab{b}}.

\bibitem[Geng et~al.(2015)Geng, Kairouz, Oh, and Viswanath]{GV_2_Dimension}
Quan Geng, Peter Kairouz, Sewoong Oh, and Pramod Viswanath.
\newblock The staircase mechanism in differential privacy.
\newblock \emph{IEEE Journal of Selected Topics in Signal Processing},
  9\penalty0 (7):\penalty0 1176--1184, Oct 2015.

\bibitem[Geng et~al.(2018)Geng, Ding, Guo, and Kumar]{GW2_18}
Quan Geng, Wei Ding, Ruiqi Guo, and Sanjiv Kumar.
\newblock {Truncated Laplacian Mechanism for Approximate Differential Privacy}.
\newblock \emph{ArXiv e-prints}, October 2018.

\bibitem[Ghosh et~al.(2009)Ghosh, Roughgarden, and Sundararajan]{Ghosh09}
Arpita Ghosh, Tim Roughgarden, and Mukund Sundararajan.
\newblock Universally utility-maximizing privacy mechanisms.
\newblock In \emph{Proceedings of the 41st annual ACM symposium on Theory of
  computing}, STOC '09, pages 351--360. ACM, 2009.

\bibitem[Gupte and Sundararajan(2010)]{minimax10}
Mangesh Gupte and Mukund Sundararajan.
\newblock {Universally optimal privacy mechanisms for minimax agents}.
\newblock In \emph{Symposium on Principles of Database Systems}, pages
  135--146, 2010.

\bibitem[Jain et~al.(2012)Jain, Kothari, and Thakurta]{Jain12}
Prateek Jain, Pravesh Kothari, and Abhradeep Thakurta.
\newblock Differentially private online learning.
\newblock In \emph{Proceedings of the 25th Annual Conference on Learning Theory
  (COLT)}, 2012.

\bibitem[Jain et~al.(2018)Jain, Thakkar, and Thakurta]{JainMC}
Prateek Jain, Om~Dipakbhai Thakkar, and Abhradeep Thakurta.
\newblock Differentially private matrix completion revisited.
\newblock In \emph{Proceedings of the 35th International Conference on Machine
  Learning (ICML)}, 2018.

\bibitem[Mironov(2017)]{RenyiDP}
Ilya Mironov.
\newblock R\'{e}nyi differential privacy.
\newblock In \emph{2017 IEEE 30th Computer Security Foundations Symposium
  (CSF)}, pages 263--275, Aug. 2017.

\bibitem[Park et~al.(2017)Park, Foulds, Chaudhuri, and Welling]{EM}
Mijung Park, James Foulds, Kamalika Chaudhuri, and Max Welling.
\newblock {DP-EM: Differentially Private Expectation Maximization}.
\newblock In \emph{Proceedings of the 20th International Conference on
  Artificial Intelligence and Statistics (AISTATS)}, 2017.

\bibitem[Phan et~al.(2016)Phan, Wang, Wu, and Dou]{Phan2016}
Ngoc-Son Phan, Yue Wang, Xintao Wu, and Dejing Dou.
\newblock Differential privacy preservation for deep auto-encoders: an
  application of human behavior prediction.
\newblock In \emph{AAAI}, 2016.

\bibitem[Sheffet(2018)]{HT18}
Or~Sheffet.
\newblock Locally private hypothesis testing.
\newblock In \emph{Proceedings of the 35th International Conference on Machine
  Learning (ICML)}, 2018.

\bibitem[Shokri and Shmatikov(2015)]{Shokri15}
Reza Shokri and Vitaly Shmatikov.
\newblock Privacy-preserving deep learning.
\newblock In \emph{Proceedings of the 22Nd ACM SIGSAC Conference on Computer
  and Communications Security}, CCS '15, pages 1310--1321. ACM, 2015.

\bibitem[Soria-Comas and Domingo-Ferrer(2013)]{DomingoFerrer2013}
Jordi Soria-Comas and Josep Domingo-Ferrer.
\newblock Optimal data-independent noise for differential privacy.
\newblock \emph{Information Sciences}, 250:\penalty0 200 -- 214, 2013.

\end{thebibliography}

\end{document}